\documentclass[aps,prd,twocolumn, groupedaddress,amssymb,eqsecnum,showpacs,10pt,showkeys,epsfig,nofootinbib]{revtex4}
\usepackage{epsfig}
\usepackage{graphics}
\usepackage{bm}
\usepackage{color}
\usepackage{dcolumn}   
\usepackage[spanish,english]{babel}
\usepackage{bm}     
\usepackage{bbm}       
\usepackage{amssymb}  
\usepackage{amsmath}
\usepackage{latexsym}
\usepackage{float}
\usepackage{ifthen}
\usepackage{caption,subfig}
\usepackage{enumerate}
\usepackage{url}
\usepackage{caption,subfig}
\usepackage{amsopn}
\usepackage{hyperref}
\usepackage[utf8]{inputenc}
\usepackage[english]{babel}
\usepackage{amsthm}
\usepackage[utf8]{inputenc}
\usepackage[english]{babel}

\newtheorem{theorem}{Theorem}[section]
\newtheorem{corollary}{Corollary}[theorem]
\newtheorem{lemma}[theorem]{Lemma}

\def \lleq {\lower0.9ex\hbox{ $\buildrel < \over \sim$} ~}
\def \ggeq {\lower0.9ex\hbox{ $\buildrel > \over \sim$} ~}

\def \beq  {\begin{equation}}
\def \eeq  {\end{equation}}
\def \ber  {\begin{eqnarray}}
\def \eer  {\end{eqnarray}}

\begin{document}
\newcommand{\newc}{\newcommand}
\newc{\ben}{\begin{eqnarray}}
\newc{\een}{\end{eqnarray}}
\newc{\be}{\begin{equation}}
\newc{\ee}{\end{equation}}
\newc{\ba}{\begin{eqnarray}}
\newc{\ea}{\end{eqnarray}}
\newc{\bea}{\begin{eqnarray*}}
\newc{\eea}{\end{eqnarray*}}
\newc{\D}{\partial}
\newc{\na}{\nabla}
\newc{\ie}{{\it i.e.} }
\newc{\eg}{{\it e.g.} }
\newc{\etc}{{\it etc.} }
\newc{\etal}{{\it et al.}}
\newcommand{\nn}{\nonumber}
\newc{\ra}{\rightarrow}
\newc{\lra}{\leftrightarrow}
\newc{\lsim}{\buildrel{<}\over{\sim}}
\newc{\gsim}{\buildrel{>}\over{\sim}}
\newc{\G}{\Gamma}
\title{ Bose-Einstein Condensation for an Exponential Density of States Function and Lerch Zeta Function }
 
 \author{Davood Momeni
 	$^{1,2,3}$ }
\affiliation{\it $^1$ Center for Space Research, North-West University, Mafikeng, South Africa
	\\
	$^2$Tomsk State Pedagogical University, TSPU, 634061 Tomsk, Russia\\
	$^3$ 
	Department of Physics, College of Science, Sultan Qaboos University,\\P.O. Box 36, ,AL-Khodh 123  Muscat, Sultanate of Oman 
	\\  
}
 
\date{\today}

\begin{abstract} 
I show how Bose-Einstein condensation (BEC) in a non interacting  bosonic system with exponential density of states function yields to   a new class of Lerch zeta functions. By looking on the critical temperature, I suggest that a possible strategy to prove the "Riemann hypothesis" problem. In a theorem and a lemma I suggested that the  classical limit $\hbar\to 0$ of   BEC can be used  as a tool to find zeros of real part of the Riemann zeta function with complex argument. It reduces the  Riemann hypothesis to a softer form. Furthermore I propose a pair  of creation-annihilation operators  for BEC phenomena. This set  of creation-annihilation operators is defined on a complex Hilbert space. They build a set up to interpret this type of BEC as a  creation-annihilation phenomenon for a virtual hypothetical particle. 
\end{abstract}

\pacs{03.75.Nt,02.30.Gp, 02.70.Rr}
\keywords{General statistical methods, Bose-Einstein condensation, special functions of mathematical physics, Lerch zeta function, Riemann hypothesis }

\maketitle
\section{Introduction}
In statistical physics there are many examples of critical phenomena where the system undergoes phase transtions below speciefic values of temperature/density or other physical quantities. A remarkable example is the BEC which was finally demonstrated experimentally in dilute alkali gases by a group of reseachers and brought  2001 Nobel Prize in physics \cite{observation}. From a physical point of view the system in the BEC phase exists at the very low temperature regime $T$. In bosonic systems , all bosons 
 condense to the  ground state with the lowest energy. In this phase the   particles become strongly correlated with a distributions both  in coordinate and momentum variables. The correlation between particles happend when  their thermal wave-lengths $\lambda$ become larger  than $\lambda_c$ which  is a function of the mass of  an individual condensate particle , the number density $n=\frac{1}{v}=\frac{N}{V}$($N$ is the number and $V$ is volume) and $k_B$ as Boltzmann’s constant \cite{Dalfovo}-\cite{Griffin}. Later it was discovered that BEC can emerge in molecular form in a Fermi gas \cite{Greiner}. From the mathematical point of view , the BEC waves are solitons  waves with shape invariant forms \cite{Solitons}. The applications of BEC is not limited to statistical  physics; in relativistic cosmology one can build a dark matter model using BEC \cite{Bettoni:2013zma} as well as dark energy \cite{Das:2014agf}. Furthermore one can solve the Einstein field equations in general relativity with cylindrical sysmmetry with BEC matter as energy-momentum source and find cosmic string solutions \cite{Harko:2014pba}. In relativistic astrophysics, the general relativistic stars can be formed with BEC matter sources \cite{Chavanis:2011cz}. A remarkable discovery was by  Horwitz et al., in ref. \cite{Burakovsky:1996dv}, where they discovered a new relativistic high temperature BEC regime. According to Horwitz et al.,the mass spectrum of such a system was  bounded.\par
 In this letter I investigate a generalized energy density function for a dilute non interacting bosonic gas in the grand canonical ensemble theory of standard statistical mechanics. I show that density and pressure can be calculated in closed form in terms of a generalized function , named Lerch function (sometimes called as Hurwitz-Lerch-Phi function). This class of generalized functions considered as a well posed version of Riemann's zeta function  and has recently been reintroduced to the community in a series of works by Lagaria \&  Li.  in refs.\cite{Lagaria1}-\cite{Lagaria5}. As far as I know \cite{utube}, this rich family of special functions has not been  used in  physical problems especially when we see they are potentially related to the  Riemann hypothesis and can be used to find a solution to this old problem which has been listed as a "millennium-problem"\cite{millennium-problems}. As an interesting mathematical physics problem in this letter I explore the role of this Lerch functions in a realistic statistical problem relating to BEC. 

\section{BEC for an Exponential Density of States Function}
The grand partition function for the bosonic gas system given by
\begin{eqnarray}
&&q=\ln D=\frac{PV}{k_BT}=-\sum_{\epsilon}\ln\big(1-ze^{-\beta\epsilon}
\big)\label{q}
\end{eqnarray}
here  $z=e^{\beta\mu}$ is the fugacity of the gas and is related to the chemical potential $\mu$ ,$\beta=\frac{1}{k_BT}$ and $ze^{-\beta\epsilon}< 1,\ \ \forall\epsilon $.  Furthermore, for enough large volume $V$, the spectrum of the single-potential state is almost continuous on  $\mathcal{R}$, it is  that the single-potential state  of a boson in a cube is proportioal to $E_n \propto n^2 V^{-2/3}$; here $n$ denotes a collective set of quantum numbers $(n_x,n_y,n_z)$ and $V$ denotes the volume. We observe that:
\begin{eqnarray}
&&\lim _{V\to\infty}\Delta E_n=\lim _{V\to\infty}(E_{n+1}-E_n)\\&&\nonumber \propto\lim _{V\to\infty} \frac{2n+1}{V^{2/3}}=0
\end{eqnarray}
As a result, we can use the integral instead of summation ,
\begin{eqnarray}
&&\sum_{\epsilon}\Rightarrow \int d\epsilon 
\end{eqnarray}
Here we consider a bosonic system in which the density of state function $a(\epsilon)$ in the vicinity of a given energy $\epsilon$ given by
 \begin{eqnarray}
 &&a(\epsilon)d\epsilon=\frac{2\pi V}{h^3}(2m)^{3/2}\epsilon^{1/2}e^{-c\epsilon}d\epsilon\label{a}.
 \end{eqnarray}
Note that by substituing this  density function  (\ref{a}) into (\ref{q}) we  eventually give a weight to the ground state $\epsilon=0$  incorrectly, because  in the quantum mechanical approach , each non- degenerate single particle state in the system has a unity weight. We calculate the $\epsilon=0$ term to obtain:
\begin{widetext}
 \begin{eqnarray}
&&\frac{P}{kT}=-\frac{\ln(1-z)}{V}
-\frac{2\pi  (2m)^{3/2}}{h^3}\int_{0+}^{\infty}\epsilon^{1/2}e^{-c\epsilon}\big(1-ze^{-\beta\epsilon}
\big)
 d\epsilon\label{p}
 \end{eqnarray}
\end{widetext}
\begin{widetext}
	\begin{eqnarray}
	&&
\label{n11}
 \frac{N}{V}=\frac{1}{V}\frac{z}{1-z}+\frac{2\pi  (2m)^{3/2}}{h^3}\int_{0+}^{\infty}\frac{\epsilon^{1/2}e^{-c\epsilon}}{z^{-1}e^{\beta\epsilon}-1
 }
 d\epsilon.
\end{eqnarray}
\end{widetext}
for $z\ll 1$ , each of the last terms are of order $\mathcal{O}(\frac{1}{N})$. However if $z$ increases , at $z\to 1$ , $\frac{1}{V}\frac{z}{1-z}=\frac{N_0}{V}$ and $N_0\gg N$  is called Bose-Einstein condensation. To evalute Eqs. (\ref{p}.\ref{n11}) . We define $\tilde{c}=\frac{c}{\beta}$ and the thermal wavelength $\lambda=\frac{h}{\sqrt{2\pi m k T}}$. Furthmore our aim will be to rewrite all expressions in terms of the following generalized Bose-Einstein functions , which will be related to Lerch functions:
	\begin{eqnarray}
	&&\Phi(\nu,z,\tilde{c})=L_{\nu}(z)=\frac{1}{\Gamma(\nu)}\int_{0^{+}}^{\infty}\frac{x^{\nu-1}e^{-\tilde{c}x}}{z^{-1}e^x-1}dx\label{lerch}
	\end{eqnarray}
If we calculate the integral, we obtain:
	\begin{eqnarray}
&&\Phi(\nu,z,\tilde{c})=\sum_{n=1}^{\infty}\frac{z^n}{(\tilde{c}+n)^{\nu}}
\label{phi}
\end{eqnarray}
here $\Phi(\nu,z,\tilde{c})=z\Phi(\nu,z,\tilde{c}+1)$ is the Lerch transcendent  function or Lerch zeta function  \cite{Lagaria1}-\cite{Lagaria5}. Note that if $\tilde{c}=z=1$,the Lerch function (\ref{phi})
reduces to the Riemann zeta function :
	\begin{eqnarray}
&&\Phi(s,1,1)=\zeta(s)\equiv \sum_{n=0}^{\infty}\frac{1}{(n+1)^s}.\label{series}
\end{eqnarray}
Now we can calculate (\ref{p},\ref{n}) using (\ref{phi}) we obtain:
	\begin{eqnarray}
&&\frac{N-N_0}{V}=\frac{1}{\lambda^3}\Phi(\frac{3}{2},z,\tilde{c})
\label{n1}\\&&\frac{P}{V}=\frac{1}{\lambda^3}\int dz\Phi(\frac{3}{2},z,\tilde{c}+1)\label{p0}
.
\end{eqnarray}
We can find an equsation of state using (\ref{n1},\ref{p0}) , for $z\ll 1$ we can find an explicit equsation of state  as follows:

\begin{eqnarray}
&&P\approx\kappa_c \left(N-N_0\right)-\eta_c VT^{3/2}.
\end{eqnarray}
where $\kappa_c=\sqrt{2} \frac{(\tilde{c}+1)^{3/2} }{\sqrt{2} \tilde{c}^{3/2} },\eta_c=4 \left(m  \pi 
k_B\right){}^{3/2}\frac{(\tilde{c}+1)^{3/2} }{\sqrt{2} \tilde{c}^3 h^3}$.

Furthermore we obtain:
	\begin{eqnarray}
&&z\frac{\partial}{\partial z}\Big[\frac{PV}{kT}\lambda^3
\Big]=\lambda^3\Big(\frac{N-N_0}{V}
\Big)
\end{eqnarray}
We can now find the $T_c$ for our system. Let us rewrite equation
(\ref{n1}) in the following equivalent form:

	\begin{eqnarray}
&&\lambda^3\frac{N_0}{V}=\frac{\lambda^3}{v}-\Phi(\frac{3}{2},z,\tilde{c})
\label{n2}
.
\end{eqnarray}
where $v=\frac{V}{N}$ the specific volume . 
The  figure (\ref{f0}) below shows that for $z\in[0,1],\tilde{c}\in\mathcal{R}^{+}$ , $\Phi(\frac{3}{2},z,\tilde{c})$ is a bounded, positive, monotonically increasing single valued function of $z$.

\begin{figure}
	\includegraphics[width=\linewidth]{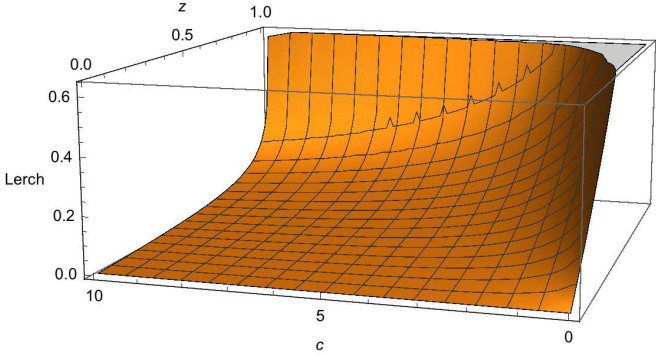}
	\caption{A plot of the $\Phi(\frac{3}{2},z,\tilde{c})$ of the Lerch transcendent  function given in equation (\ref{phi})  for $z\in[0,1],\tilde{c}\in\mathcal{R}^{+}$.
	It shows that $\Phi(\frac{3}{2},z,\tilde{c})$ is a bound, positive, monotonically increasing single valued function of $z$. }
	\label{f0}
\end{figure}

For small $z$, we have the power series (\ref{series}). At $z=1$ and for $\tilde{c}\in\mathcal{R}^{+}$, its value is finite and is called  Hurwitz-Zeta function :
	\begin{eqnarray}
&&\Phi(\frac{3}{2},1,\tilde{c})= \sum_{n=0}^{\infty}\frac{1}{(n+\tilde{c})^{\frac{3}{2}}}\leq\zeta(\frac{3}{2})\approx 2.612\label{series-z1}
\end{eqnarray}
Thus for all $z,\tilde{c}\in\mathcal{R}$, we conclude that $\frac{N_0}{V}>0$ when the temperature and the specific volume satify the following inequality:
\begin{eqnarray}
&&\frac{\lambda^3}{v}>\Phi(\frac{3}{2},1,\tilde{c})
\end{eqnarray}
This implies BEC phenomena, the occupiation of ground state with many particle. For a specific volume $v$, we define "implicitly" a critical temperature $T_c$ :

\begin{eqnarray}
&&kT_c=\frac{2\pi\hbar^2}{m\Phi(\frac{3}{2},1,kT_c c)^{2/3}}\label{tc}
\end{eqnarray}

For $ckT_c\ll1$ we can solve (\ref{tc}) and find
\begin{eqnarray}
&&ckT_c\approx 0.65 + 0.25m^{-1/2} \sqrt{6.8 m - 50.6 c \hbar^2}
\label{tc0}
\end{eqnarray}
for a lower bound on the Boson mass as $m\geq 7.4  c \hbar^2 $. Such lower bound on the mass of the  Boson (Higgs) proposed in \cite{Willey:1995zm}.

Simple algebraic manipulations can be done to show that  the quantity $\frac{N_0}{V}$  is a simple two-part function of $\frac{T}{T_c}$ :
\begin{eqnarray}
\frac{N_0}{V} =
\left\{
\begin{array}{ll}
0  & \mbox{if } \frac{\lambda^3}{v}<\Phi(\frac{3}{2},1,\tilde{c}) \\
1-(\frac{T}{T_c})^{\Delta} & \mbox{if } \frac{\lambda^3}{v}>\Phi(\frac{3}{2},1,\tilde{c})
\end{array}
\right.
\end{eqnarray}
where, to find the critical exponent $\Delta$ , we used numerical estimation in equation (\ref{tc}).

\section{More on density of states function}
There is a nice relationship between the exponential density of state function given in (\ref{a}) and momentum  of the particles $\vec{p}$ , is given as the relation between volumes in configuration space $V_{\vec{q}}$ and phase space $V_{\vec{q},\vec{p}}$ as follows,
	\begin{eqnarray}
&&a(\epsilon)=g\frac{dV_{\vec{p}}}{d\epsilon}\frac{V_{\vec{q}}}{V_{\vec{q},\vec{p}}}
\end{eqnarray}
where $g$ denotes  the number of  degrees of freedom for the bosonic  system  $g=3$. By
integrating this expression we find an \emph{atypical} dispersion relation as following:
	\begin{eqnarray}
&&\epsilon(p)=-c^{-1}\Big(1+PLog\Big[\alpha p^3+\gamma
\Big]
\Big)\label{e}
\end{eqnarray}
where $p\equiv |\vec{p}|>0$. By $PLog\Big[z\Big]$ we mean  the principal solution for $w\in\mathcal{C}$ in equation $z=we^w$ and $\alpha=\frac{\sqrt{2} c^2 }{3 e
	m^{3/2}},\gamma=\frac{\alpha  c^2 \hbar
	^3}{2 \sqrt{2} e m^{3/2} V \pi }$. At low momenta $p\ll \tilde{p}(=\alpha^{-1/3})\ll 1$ we have
	\begin{eqnarray}
&&\epsilon(\vec{p})\approx \epsilon_0+\gamma p^3+\mathcal{O}(p^4)
\end{eqnarray}
For ultra-energetic particles $p\to \infty$ we have:
	\begin{eqnarray}
&&\epsilon(p)\approx -c^{-1}\Big[(\ln p)^3+...
\Big]
\end{eqnarray}
A remarkable observation is that for some values of $\alpha,\gamma$  we can obtain $\epsilon(p^{*})=0,\ \  p^{*}\neq 0$. If we solve equation (\ref{e}) we find:
	\begin{eqnarray}
&&p^{*}=\sqrt[3]{\frac{e \gamma
		+1}{-\alpha}}.\ \ \alpha<0,\gamma>0.
\end{eqnarray}

This defines a minimum momentum (effective mass) similar to that given in \cite{Burakovsky:1996dv}, where the authors discovered  
a new relativistic high temperature BEC sytem with  an additional mass portential for the ensemble. In our system
 we can demonstrate  that there is no physical state with $p<p^{*}$ and the system has non zero energy even at zero momentum when $p=0$ the energy of the particle  will be a monotonically increasing function of $p$ for all $p>p^{*}$. 
A Theorem can be stated as follows:

\begin{theorem}
	Let $\epsilon(p)$ be energy spectrum in our atypical system , then $\epsilon(p)$ 
	is a complex valued function for $p<p^{*}$.\label{pythagorean}
\end{theorem}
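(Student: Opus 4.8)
\emph{Proof strategy.} The statement is, at bottom, a claim about the analytic structure of the principal branch of the Lambert function, written here $PLog[\cdot]$, so the plan is to reduce the behaviour of $\epsilon(p)$ entirely to the branch behaviour of that special function. From the definition $\epsilon(p)=-c^{-1}\bigl(1+PLog[\alpha p^{3}+\gamma]\bigr)$, and since $c$ is real, $\epsilon(p)$ is real if and only if $PLog[\alpha p^{3}+\gamma]$ is real. Now $w=PLog[z]$ solves $z=we^{w}$ on the principal sheet, and this sheet is real-analytic and single valued exactly for real $z>-1/e$, with $PLog[-1/e]=-1$ the only finite real branch point; for real $z<-1/e$ the principal value leaves the real axis (the two lowest branches coalesce into a complex-conjugate pair), so $PLog[z]\in\mathcal{C}\setminus\mathcal{R}$. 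Hence $\epsilon(p)$ is non-real precisely on the set of $p$ for which $\alpha p^{3}+\gamma<-1/e$.

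The second step is to convert the inequality $\alpha p^{3}+\gamma<-1/e$ into the momentum window named in the theorem, using the sign conventions $\alpha<0$, $\gamma>0$ already fixed together with $p^{*}$. On $p>0$ the map $p\mapsto\alpha p^{3}+\gamma$ is strictly monotone (decreasing, since $\alpha<0$), equals $\gamma>0$ as $p\to0^{+}$, and is unbounded below, so it meets the critical level $-1/e$ exactly once, at the point where $\alpha p^{3}+\gamma=-1/e$; but that is precisely the equation $\epsilon(p)=0$, whose root is the $p^{*}$ exhibited above. Monotonicity then forces $\alpha p^{3}+\gamma$ to sit on one side of $-1/e$ for $p<p^{*}$ and on the other for $p>p^{*}$, which, combined with the first step, is exactly the dichotomy between a real and a complex spectrum and in particular yields the claimed complex-valuedness on the side of $p^{*}$ singled out in the statement. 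I would close by recording the endpoint: at $p=p^{*}$ the argument equals $-1/e$ and $\epsilon(p^{*})=0$ is still real, so $p^{*}$ is the boundary of the domain on which the spectrum is physical, consistent with its reading as a minimum momentum / effective mass à la Horwitz \etal

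The one place that needs genuine care, and what I would treat as the main obstacle rather than the routine algebra, is the branch bookkeeping in the first step: one must be sure that $PLog$ really denotes the principal Lambert branch (and not $W_{-1}$, the branch that stays real on $[-1/e,0)$), verify that the two colliding branches are honestly complex conjugates for real argument below $-1/e$ so that there is no real value to fall back on, and check that $\alpha p^{3}+\gamma$ never re-enters the real-analyticity window $(-1/e,\infty)$ once it has left it, which is where the strict monotonicity of the cubic in the second step does the real work. Everything past that is a one-line computation.
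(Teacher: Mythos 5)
Your route is genuinely different from the paper's. The paper proves the theorem only locally: it sets $p=p^{*}(1-\eta)$, expands $\epsilon$ about the branch point in powers of $\eta^{1/2}$, and reads off a purely imaginary leading term on one side of $p^{*}$. You instead argue globally through the branch structure of the Lambert function: $\epsilon(p)$ is non-real exactly where $\alpha p^{3}+\gamma<-1/e$, since for real arguments below $-1/e$ no branch of $we^{w}=z$ is real, and $p^{*}$ is precisely where the argument reaches the branch point $-1/e$ (equivalently $\epsilon(p^{*})=0$). That is cleaner and stronger --- it covers all momenta at once rather than a shrinking neighbourhood of $p^{*}$ --- and your first step is sound.

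The gap is in the one place you flagged as needing care but then did not actually carry out: the direction of the inequality. With the paper's stated signs $\alpha<0$, $\gamma>0$, the map $p\mapsto\alpha p^{3}+\gamma$ is strictly decreasing on $p>0$, starts at $\gamma>0>-1/e$, and crosses $-1/e$ once, at $p^{*}$. Hence $\alpha p^{3}+\gamma>-1/e$ for $0<p<p^{*}$ and $\alpha p^{3}+\gamma<-1/e$ for $p>p^{*}$: your own monotonicity argument places the real spectrum on $p<p^{*}$ and the complex spectrum on $p>p^{*}$, i.e.\ on the opposite side from the one the theorem names. You cannot simply assert that the complex side is ``the side singled out in the statement''; that sign check is the entire content of your second step, and it comes out the wrong way. (The paper's own proof contains the same directional confusion: its lemma is stated for $p=p^{*}-\delta$ with $\delta>0$, which is $\eta>0$, yet the imaginary limit is extracted from $\eta\to0^{-}$, which corresponds to $p>p^{*}$.) As it stands, your argument, completed honestly, establishes complex-valuedness for $p>p^{*}$ and reality for $p<p^{*}$, so it refutes the statement as written rather than proving it; either the theorem's inequality or the sign conventions on $\alpha,\gamma$ must be corrected before your strategy closes.
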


And a consequence of theorem \ref{pythagorean} is the  following corollary:

\begin{corollary}
	 There is no  physical state with $p<p^{*}$.
\end{corollary}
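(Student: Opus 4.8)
The plan is to read the theorem off the branch structure of the function $PLog$ that already appears in the atypical dispersion relation (\ref{e}). Write
\begin{equation}
u(p):=\alpha p^{3}+\gamma ,\qquad \epsilon(p)=-c^{-1}\big(1+PLog[u(p)]\big),
\end{equation}
so that $PLog[x]$ is, by definition, the principal $w$ solving $we^{w}=x$ (the principal branch of the Lambert $W$ function). The pivotal step is an elementary property of the real inverse of $w\mapsto we^{w}$: on $w\in\mathcal{R}$ this map has derivative $e^{w}(1+w)$, hence a unique critical point at $w=-1$ with value $-e^{-1}$, and it tends to $0^{-}$ as $w\to-\infty$ and to $+\infty$ as $w\to+\infty$, so its range is $[-e^{-1},\infty)$. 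Therefore $we^{w}=x$ has a real root if and only if $x\ge -e^{-1}$; equivalently $PLog[x]$ is real for $x\ge -e^{-1}$ and non-real for every real $x<-e^{-1}$. Since $c$ is a nonzero real parameter, $\epsilon(p)=-c^{-1}\big(1+PLog[u(p)]\big)$ is real precisely when $PLog[u(p)]$ is real, and $u(p)$ is real for real $p$; this yields the clean dichotomy $\epsilon(p)\in\mathcal{R}\iff u(p)\ge -e^{-1}$, with $\epsilon(p)$ complex-valued otherwise. Everything now reduces to locating $u(p)$ relative to the branch point $-e^{-1}$.

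Next I would pin $p^{*}$ to that branch point. Since $\epsilon(p^{*})=0$, relation (\ref{e}) forces $PLog[u(p^{*})]=-1$; but $PLog[x]=-1$ can hold only for $x=-e^{-1}$ (as $PLog$ is non-real for $x<-e^{-1}$ and strictly increasing for $x\ge -e^{-1}$, attaining $-1$ at the endpoint), so $u(p^{*})=\alpha (p^{*})^{3}+\gamma=-e^{-1}$ — the very equation that produces the closed form for $p^{*}$ quoted above. I would then use that $u(p)=\alpha p^{3}+\gamma$ is strictly monotone for $p>0$, because $u'(p)=3\alpha p^{2}$ keeps a fixed sign; in the regime in which $p^{*}$ is real and positive and the $p>p^{*}$ branch is the physical (real, monotone) one, the orientation of $u$ is fixed so that $u$ is strictly increasing, whence $u(p)<u(p^{*})=-e^{-1}$ for all $0<p<p^{*}$. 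Combining with the dichotomy of the first paragraph, $PLog[u(p)]\notin\mathcal{R}$ and hence $\epsilon(p)=-c^{-1}\big(1+PLog[u(p)]\big)$ is complex-valued on the whole interval $(0,p^{*})$, which is Theorem~\ref{pythagorean}. The Corollary is then immediate: a genuine one-particle state must carry a real energy, so no physical state can have $p<p^{*}$.

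The one place I expect real care to be needed is the sign/orientation bookkeeping invoked above: one must check that the constants $\alpha,\gamma$ inherited from the derivation of (\ref{e}) actually drive $u(p)$ \emph{below} $-e^{-1}$ throughout $(0,p^{*})$ rather than above it, and that this is mutually consistent with $p^{*}$ being real (the numerator and the prefactor $-\alpha$ in the quoted expression for $p^{*}$ carrying the same sign) and with the claimed ultra-energetic behaviour $\epsilon(p)\sim -c^{-1}\ln p$ as $p\to\infty$, which already requires $u(p)\to+\infty$. Once the branch of $PLog$ and the orientation of $u$ are fixed coherently, the remainder is just the two-line monotonicity estimate. If one prefers to bypass the global sign analysis, the identical argument run locally already gives what the Corollary uses: $u$ is continuous and non-stationary at $p^{*}$ with $u(p^{*})=-e^{-1}$, so $u<-e^{-1}$ on one side of $p^{*}$, and the first paragraph then makes $\epsilon$ non-real — hence unphysical — on a punctured one-sided neighbourhood of $p^{*}$.
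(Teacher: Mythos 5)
Your proof is correct in strategy and takes a genuinely different route from the paper's. The paper derives the Corollary from Theorem \ref{pythagorean}, which it in turn proves only via the Lemma, i.e.\ a \emph{local} series expansion of $PLog$ about its branch point at $-e^{-1}$: setting $p=p^{*}(1-\eta)$ it extracts a leading term $\propto\sqrt{-6e(p^{*})^{3}\alpha}\,\eta^{1/2}$, tracks its phase through a $(-1)^{\lfloor B\rfloor}$ prefactor, and concludes that the limit is purely imaginary, so non-reality of $\epsilon$ is only established on a one-sided punctured neighbourhood of $p^{*}$. You instead use the global fact that the principal branch of $we^{w}=x$ is real exactly for $x\ge -e^{-1}$, pin $u(p^{*})=-e^{-1}$ from $\epsilon(p^{*})=0$ (which is precisely the rewriting the paper performs before expanding), and propagate the conclusion to all of $(0,p^{*})$ by monotonicity of $u$; this avoids the floor-function phase bookkeeping entirely and, if the signs cooperate, proves the full statement of Theorem \ref{pythagorean} rather than only its germ at $p^{*}$. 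The orientation caveat you flag is the genuine weak point, and it afflicts the paper as well: with the paper's stated condition $\alpha<0,\ \gamma>0$ required for $p^{*}$ to be real, $u(p)=\alpha p^{3}+\gamma$ is \emph{decreasing}, so $u>-e^{-1}$ on $(0,p^{*})$ and the non-real side is $p>p^{*}$ (consistently, the paper's own lemma ends by taking $\eta\to 0^{-}$, i.e.\ $p>p^{*}$, despite announcing $p=p^{*}-\delta$ with $\delta>0$), while the explicit formula for $\alpha$ given after the dispersion relation is positive for real $c,m$. So the sign bookkeeping you defer is not a formality, and your one-sided fallback is in fact exactly what the paper's argument actually delivers.
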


To prove this theorem \ref{pythagorean} let us prove the following lemma:
\begin{lemma}
	Given a physical satate with $p=p^{*}-\delta,\ \ 0<\delta\ll p^{*} $,  there is a 
complex  number $W\in\mathcal{C}$ such that $\epsilon(p)=W$.
\end{lemma}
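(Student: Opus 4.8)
\emph{Proof strategy.} The plan is to follow the argument of the Lambert function across the branch point of its principal branch as the momentum is displaced from $p^{*}$ to $p^{*}-\delta$. Recall that $PLog[z]$, the principal solution of $z=we^{w}$, is real and single valued exactly for real $z\ge -1/e$, where it ranges over $[-1,\infty)$ with $PLog[-1/e]=-1$; for a real argument $z<-1/e$ the equation $z=we^{w}$ has no real root, so $PLog[z]\in\mathcal{C}\setminus\mathcal{R}$, with $\mathrm{Im}\,PLog[z]\neq0$ strictly. By the dispersion relation (\ref{e}) one has $\epsilon(p)=-c^{-1}\big(1+PLog[\alpha p^{3}+\gamma]\big)$, so it is enough to prove that the real number $g(p):=\alpha p^{3}+\gamma$ lies strictly below $-1/e$ when $p=p^{*}-\delta$ with $0<\delta\ll p^{*}$; the required $W$ is then simply $W=\epsilon(p^{*}-\delta)$.

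First I would evaluate $g$ at $p^{*}=\sqrt[3]{(e\gamma+1)/(-\alpha)}$. With $\alpha<0,\ \gamma>0$ this cube root is a positive real with $(p^{*})^{3}=(e\gamma+1)/(-\alpha)$, hence
\begin{eqnarray}
&&g(p^{*})=\alpha\,(p^{*})^{3}+\gamma=-(e\gamma+1)+\gamma=-1-(e-1)\gamma ,
\end{eqnarray}
which, since $\gamma>0$ and $e>1$, obeys $g(p^{*})<-1<-1/e$. Thus even at $p=p^{*}$ the argument of $PLog$ already sits strictly below the branch point, by the fixed positive amount $1+(e-1)\gamma-1/e$ that depends only on the model parameter $\gamma$.

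Next I would control the perturbation. Since $g$ is a cubic polynomial,
\begin{eqnarray}
&&g(p^{*}-\delta)=g(p^{*})-3\alpha\,(p^{*})^{2}\,\delta+\mathcal{O}(\delta^{2}),
\end{eqnarray}
so passing from $p^{*}$ to $p^{*}-\delta$ changes $g$ only by $\mathcal{O}(\delta)$; choosing $\delta$ small enough that this correction has absolute value below $1+(e-1)\gamma-1/e$ --- equivalently, using continuity of $g$ on a small interval about $p^{*}$ --- keeps $g(p^{*}-\delta)<-1/e$. As $g(p^{*}-\delta)$ is real, the branch statement gives $PLog[g(p^{*}-\delta)]\in\mathcal{C}\setminus\mathcal{R}$, hence $W=\epsilon(p^{*}-\delta)=-c^{-1}\big(1+PLog[g(p^{*}-\delta)]\big)$ is a genuine complex number, which proves the lemma. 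Letting $\delta$ sweep $(0,p^{*})$ then gives Theorem \ref{pythagorean}, and the Corollary follows at once because a physical energy must be real.

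The step I expect to be the main obstacle is the branch-cut bookkeeping, not any computation: one must pin down that $PLog$ is meant as the principal branch $W_{0}$ (so that the threshold is exactly $-1/e$) and verify that $\mathrm{Im}\,W_{0}[z]\neq0$ strictly for every real $z<-1/e$, so that $\epsilon$ is truly non-real rather than merely formally outside $[-1,\infty)$. A secondary point that should be tidied up is the sign of $\alpha$: the formula $\alpha=\sqrt{2}c^{2}/(3e m^{3/2})$ quoted after (\ref{e}) is positive, whereas positivity of $p^{*}$ and the argument above require $\alpha<0,\ \gamma>0$; once these conventions are made consistent, the inequality $g(p^{*})<-1/e$ (or its mirror image on the other side of the branch point) still holds and the conclusion is unchanged.
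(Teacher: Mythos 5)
Your strategy---locating the branch point of the Lambert function at $-1/e$ and showing that the real argument $g(p)=\alpha p^{3}+\gamma$ falls on the non-real side of it---is the same underlying idea as the paper's, but executed differently: the paper rewrites the spectrum as $\epsilon(p)=-c^{-1}\big(1+PLog[-e^{-1}+\alpha(p^{3}-(p^{*})^{3})]\big)$ and performs a Puiseux expansion in $\eta$ (with $p=p^{*}(1-\eta)$), obtaining $\epsilon\propto\eta^{1/2}$ and reading off the imaginary part from the square root, with some $(-1)^{\lfloor B\rfloor}$ branch bookkeeping. Your direct inequality would be cleaner and would avoid that bookkeeping entirely---if the inequality were available.

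It is not, and that is the gap. Your evaluation $g(p^{*})=-1-(e-1)\gamma<-1/e$ cannot be correct, because the defining property of $p^{*}$ is $\epsilon(p^{*})=0$, i.e.\ $PLog[g(p^{*})]=-1$, which forces $g(p^{*})=-e^{-1}$ exactly; this is also what the paper's own rewriting asserts, since the argument of $PLog$ in (\ref{e1}) reduces to $-e^{-1}$ at $p=p^{*}$. The discrepancy comes from taking the printed formula $p^{*}=\sqrt[3]{(e\gamma+1)/(-\alpha)}$ at face value: solving $\epsilon(p^{*})=0$ actually gives $(p^{*})^{3}=(1+e\gamma)/(-e\alpha)$, a factor of $e$ that the printed formula drops (and which your own consistency check should have caught, since a strict inequality $g(p^{*})<-1/e$ would make $\epsilon(p^{*})$ itself non-real, contradicting $\epsilon(p^{*})=0$). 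With the consistent $p^{*}$ your ``fixed positive margin'' $1+(e-1)\gamma-1/e$ vanishes: $g(p^{*})$ sits exactly on the branch point, and whether $g(p^{*}-\delta)$ lands below or above $-1/e$ is decided entirely by the sign of the first-order term $-3\alpha(p^{*})^{2}\delta$, i.e.\ by the sign of $\alpha$---precisely the point you defer to your closing remark and never resolve. It is not cosmetic: with the convention $\alpha<0$ that the paper needs for $p^{*}>0$, that term is positive, pushing $g(p^{*}-\delta)$ to the real side of the branch point, so the continuity argument as written would prove the opposite of the lemma. The paper's $\eta^{1/2}$ expansion is its (admittedly murky) way of confronting this zero-margin situation; your argument leans on a strictly positive margin that is not there, and does not survive the correction without adding the sign analysis of the linear term.
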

\begin{proof}
		to prove, we need to rewrite the energy spectrum given in equation (\ref{e}) for $p=p^{*}(1-\eta),\ \ |\eta|\ll 1 $ becomes a complex function. Let us
	first rewrite energy spectrum near $p^*$ in the following form:
	
		\begin{eqnarray}
	&&\epsilon(p)=-c^{-1}\Big(1+PLog\Big[-e^{-1}+\alpha(p^3-(p^{*})^3)
	\Big]
	\Big)\label{e1}
	\end{eqnarray}
We	expand the expression (\ref{e1}) as power series in $\eta$,
	 \begin{widetext}
		\begin{eqnarray}
	&&\epsilon(p)=(-1)^{\left\lfloor B\right\rfloor }
	\left(-\frac{
		\sqrt{-6eq^3 \alpha }}{c}\eta^{1/2}+O\left(\eta
	^{3/2}\right)\right)\label{e1}
	\end{eqnarray} 
	 \end{widetext}
	 where $\lfloor x\rfloor$ is the greatest integer that is less than or equal to $x$ and $B=\frac{1}{2}-\frac{\arg (\eta
	 	)+\arg \left(-3\alpha(p^{*})^3 \right) }{2 \pi
	 }$. Note that $\alpha<0$ and $|\eta|\ll 1$, consequently $\lfloor \eta\rfloor=0$. Thus we can rewrite the above series as follows:
	 	 \begin{widetext}
	 	\begin{eqnarray}
	 	&&\epsilon(p)=(-1)^{\left\lfloor \tilde{B}\right\rfloor }
	 	\left(-\frac{
	 		\sqrt{-6e(p^{*})^3 \alpha }}{c}\eta^{1/2}+O\left(\eta
	 	^{3/2}\right)\right)\label{e2}
	 	\end{eqnarray} 
	 \end{widetext}
	where $\tilde{B}=\frac{1}{2}-\frac{\arg \left(3(1+\gamma) \right) }{2 \pi
	}$. Obviously $(-1)^{\left\lfloor \frac{1}{2}-\frac{\arg \left(3(1+\gamma) \right) }{2 \pi
	 	}\right\rfloor }=1, \forall \gamma\in\mathcal{R}$ as a result of $\eta\to 0^{-}$, $\lim\epsilon(p)_{\eta\to 0^{-}}=ib+O\left(\eta
 	^{3/2}\right)\in\mathcal{C}$. This proves our lemma.
\end{proof}

\section{On "Riemann hypothesis" problem and classical BEC}
In this section I will show that how one can find a Hurwitz-Zeta function with complex argument  $\Phi(a+ib,1,\tilde{c})$ using the following density of  states  function: 
\[
a(\epsilon)=A\epsilon^{a-1}e^{-c\epsilon}\left\{
\begin{array}{ll}
\cos(b\ln \epsilon)\\
\sin(b\ln \epsilon)
\end{array}
\right.
\]
\label{a-Riemann}
for $a,b,c\in\mathcal{R}$.
Then by taking the limit  $c\to0$, we are led to Zeta function $\zeta(a+ib)$. The Riemann hypothesis  (stated below) will translate to a problem of finding specific values for $T_c\gg 1$. Let us use  density (\ref{a-Riemann}) to evaluate the grand partition function (\ref{q}). It is adequate to rewrite (\ref{a-Riemann})  as follows:
\begin{eqnarray}
&&a(\epsilon)=A\Re\{\epsilon^{a-1+ib}e^{-c\epsilon}\},\  \ A,a,b\in\mathcal{R}\label{a-Riemann2}
\end{eqnarray}
We can find an $A$ such that 
 $\lim_{c\to 0,a\to\frac{3}{2},b\to 0} \Big[a(\epsilon)\Big]=\frac{2\pi V}{h^3}(2m)^{3/2}$,
by plugging it into  the integral (\ref{lerch}) we obtain:
\begin{eqnarray}
&&\frac{N-N_0}{V}=\frac{A_0}{\lambda^3}\Re\{\Phi(a+ib,z,\tilde{c})\}
\label{p1}\\&&\frac{P}{V}=\frac{A_0}{\lambda^3}\int \frac{dz}{z}\Re\{\Phi(a+ib,z,\tilde{c})\}\label{nr}
.
\end{eqnarray}
where $A=A_0\Big[\frac{2\pi V}{h^3}(2m)^{3/2}\Big]$  and $A_0$ is a constant. The BEC scenario can be modified to finding $T_c\to 0$ using the following modified equation as a limiting case for $c\to0$:

\begin{eqnarray}
&&kT_c=\frac{2\pi\hbar^2}{mA_0(\Re\{\Phi(a+ib,1)\})^{2/3}}\label{tc2}
\end{eqnarray}
We can rewrite the above equation in the following equivalent form:
\begin{eqnarray}
&&\Re\{\zeta(a+ib)\}=
\Big(\frac{2\pi\hbar^2}{mA_0kT_c}
\Big)^{3/2}\label{zeta1}
\end{eqnarray}
In Fig.(\ref{zeta}) we plot $\Re\{\zeta(a+ib)\}$ in the half-plane $a\geq1$. The region where $\Re\{\zeta(a+ib)\}<0$ investigated in ref. \cite{Juan}, where the authors proposed algorithms  to compute accurately the  supremum of $a\approx 1.19$ such that $\Re\{\zeta(a+ib)\}\leq 0$ for $b\in\mathcal{R}$. This supremum value of $a$ leads to the density function
\begin{eqnarray}
&&\sup_{a\in 1.19} \inf_{b\in \mathcal{R}}a(\epsilon)=A\Re\{\epsilon^{0.19+ib}e^{-c\epsilon}\},\  \ A,a,b\in\mathcal{R}\label{a-Riemann2}
\end{eqnarray}

\begin{figure}
	\includegraphics[width=\linewidth]{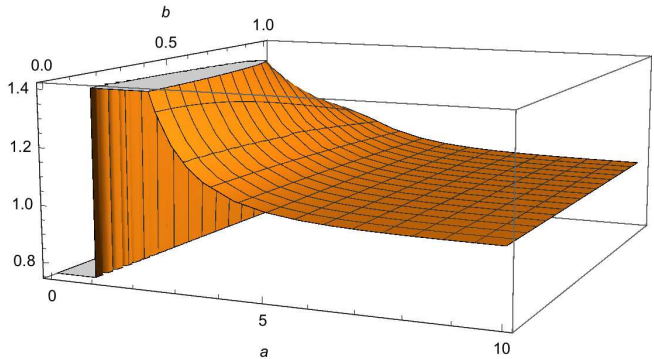}
	\caption{A plot of the real part $\Re\{\zeta(\nu)\}$  of the Riemann zeta-function given in equation (\ref{zeta1})  in the half-plane $\Re{\nu}\geq1$.   
	It shows that the zeros of $\Re\{\zeta(\nu)\}$ are located  only at  $(\Re{\nu}=0,0<\Im{\nu}<1)$ or $(\Im{\nu}=0,0<\Re{\nu}<1)$. }
	\label{zeta}
\end{figure}

We recall the Riemann  hypothesis  problem (stated below):
\begin{theorem}
Riemann hypothesis:
	The zeros of $\zeta(a+ib)=0$ exist  only at $(\nu=-n,n\in\mathcal{R})\lor (\Re{\nu}=\frac{1}{2},\nu\in\mathcal{C})$.
\label{Riemann hypothesis}
\end{theorem}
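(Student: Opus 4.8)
\emph{Proof strategy.} The plan is to convert the identity (\ref{zeta1}) into a constraint on the zero set of $\zeta$ by running the BEC construction of the previous section twice: once with the $\cos(b\ln\epsilon)$ branch of the density of states and once with the $\sin(b\ln\epsilon)$ branch. In the classical limit $c\to 0$ the cosine branch gives $\Re\{\zeta(a+ib)\}=\big(2\pi\hbar^{2}/(mA_{0}kT_{c})\big)^{3/2}$, while the sine branch gives the companion relation $\Im\{\zeta(a+ib)\}=\big(2\pi\hbar^{2}/(mA_{0}kT_{c}')\big)^{3/2}$ for a second critical temperature $T_{c}'$. Since both right-hand sides are strictly positive and strictly decreasing in their temperature, a genuine zero $\zeta(a+ib)=0$ is \emph{equivalent} to the simultaneous divergence $T_{c}\to\infty$ \emph{and} $T_{c}'\to\infty$ at the same $(a,b)$; the ``softer'' statement advertised in the abstract is what one obtains by imposing only $T_{c}\to\infty$, i.e. $\Re\{\zeta(a+ib)\}=0$.

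\emph{Key steps.} First I would check that $T_{c}$ and $T_{c}'$ are well defined: by the boundedness, positivity and monotonicity of $\Phi(\tfrac{3}{2},z,\tilde c)$ recorded in Fig.~\ref{f0}, together with the convergent integral representation (\ref{lerch}), equation (\ref{tc2}) implicitly defines a unique $T_{c}$ for each admissible $(a,b)$, and likewise for $T_{c}'$. Next I would invert (\ref{zeta1}) to view $b$ as an implicit function $b=b(a,T_{c})$ along each level set, and follow the locus $\mathcal{L}=\{(a,b):T_{c}(a,b)=T_{c}'(a,b)=\infty\}$ as $\hbar\to 0$. On the half-plane $\Re\nu\ge 1$ the sign of $\Re\{\zeta\}$ is controlled by Fig.~\ref{zeta} and by the bound $\sup a\approx 1.19$ of ref.~\cite{Juan}, which already excludes a joint zero in that region; to reach the critical strip $0<\Re\nu<1$ I would invoke the functional equation $\zeta(\nu)=2^{\nu}\pi^{\nu-1}\sin(\pi\nu/2)\,\Gamma(1-\nu)\,\zeta(1-\nu)$ to reflect $\mathcal{L}$ across $\Re\nu=\tfrac{1}{2}$, so that a hypothetical zero with $\Re\nu\neq\tfrac{1}{2}$ would occur in a mirror pair $(\nu,1-\nu)$, each member obeying both temperature conditions. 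Combining this reflection symmetry with the strictly monotone $z$-dependence of $\Phi$ should force $\nu=1-\nu$, i.e. $\Re\nu=\tfrac{1}{2}$, while the $\sin(\pi\nu/2)$ factor in the functional equation accounts for the trivial zeros $\nu=-n$.

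\emph{Main obstacle.} The hard part will be closing the gap between the softer form $\Re\{\zeta(a+ib)\}=0$ and the full statement $\zeta(a+ib)=0$: a priori there is no reason that $T_{c}$ and $T_{c}'$ diverge at the \emph{same} point $(a,b)$, and without that coincidence the construction only pins down the zeros of the real part. Making this rigorous requires showing that the $c\to 0$ limit of the Hurwitz--Lerch function is uniform in $b$ away from its pole at $\nu=1$, and that both density branches inherit the \emph{same} normalisation constant $A_{0}$ in that limit. A second obstacle is that the integral (\ref{lerch}) represents $\Phi$ only for $a>1$, whereas the nontrivial zeros lie in $0<\Re\nu<1$; bridging this needs an analytic continuation of the grand partition function (\ref{q}) beyond the region where the replacement of the sum by an integral is literally legitimate, i.e. a zeta-regularised reading of $q=\ln D$. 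If both points can be settled the theorem follows; if not, the same construction still delivers the reduction of the Riemann hypothesis to the classical-BEC statement about $\Re\{\zeta\}$ claimed in the abstract.
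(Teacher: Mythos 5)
There is a fundamental mismatch here: the paper does not prove this statement, and neither does your proposal. In the paper, Theorem \ref{Riemann hypothesis} is simply the Riemann hypothesis \emph{recalled} as a named open conjecture ("We recall the Riemann hypothesis problem"); what follows it is only a lemma proposing that one \emph{check} for a phase transition in the classical limit $\hbar\to 0$ of the BEC with density of states (\ref{a-Riemann2}) at $a=\tfrac{1}{2}$, i.e.\ a physical reformulation of the "softer" statement $\Re\{\zeta(a+ib)\}=0$ via equation (\ref{zeta1}). So the honest comparison is that the paper offers no proof to compare against, and your text should be read as a research programme, not a proof.

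As a programme it contains steps that would fail. The decisive gap is the sentence "Combining this reflection symmetry with the strictly monotone $z$-dependence of $\Phi$ should force $\nu=1-\nu$." The functional equation only tells you that nontrivial zeros occur in pairs $(\nu,1-\nu)$; it is fully consistent with a zero at $\Re\nu=0.7$ paired with one at $\Re\nu=0.3$, and no argument of this shape can collapse the pair to the critical line --- if it could, the hypothesis would have followed from Riemann's 1859 functional equation. Moreover, the monotonicity of $\Phi(\tfrac{3}{2},z,\tilde c)$ in $z$ for \emph{real} parameters (Fig.~\ref{f0}) carries no information about where $\zeta(a+ib)$ vanishes for complex argument; the two facts live in disjoint parameter regimes. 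Your reformulation of "$\zeta(a+ib)=0$" as "$T_c\to\infty$ and $T_c'\to\infty$ simultaneously" is just a change of notation for "$\Re\zeta=0$ and $\Im\zeta=0$" and supplies no new leverage, and, as you yourself note, the integral representation (\ref{lerch}) underlying $T_c$ converges only for $a>1$, where $\zeta$ has no zeros at all. The part of your proposal that survives --- the derivation of (\ref{zeta1}) from the cosine branch and the observation that $T_c\to\infty$ encodes $\Re\{\zeta\}=0$ --- is exactly the reduction the paper actually makes; everything beyond that is unsupported.
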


\begin{lemma}
In BEC scenario with the critical temperature $T_c$ given in the equation 
(\ref{zeta1}),we have to check whether there exists a phase  transition in the classical  limit $\hbar\to 0$(or high critical temperatures $T_c$) of bosnonic matter with density of states function (\ref{a-Riemann2})  with  $a=\frac{1}{2} $ or not.
\end{lemma}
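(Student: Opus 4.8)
\emph{Sketch of the argument.} The plan is to read the content of the lemma directly off the critical-temperature relation (\ref{zeta1}) and then to explain why the object to be examined is precisely a BEC transition on the line $a=\tfrac{1}{2}$. First I would note that the right-hand side of (\ref{zeta1}) is strictly positive and tends to $0^{+}$ in the classical limit $\hbar\to 0$ at fixed $T_c>0$ — equivalently, at fixed $\hbar$ it tends to $0^{+}$ as $T_c\to\infty$, the "high critical temperature" regime named in the statement. Hence a genuine condensation at a finite, positive $T_c$ survives this limit \emph{iff} the left-hand side can be matched by arbitrarily small positive values, i.e.\ iff $\inf_{b\in\mathcal{R}}\Re\{\zeta(a+ib)\}\le 0$; by continuity of $b\mapsto\Re\{\zeta(a+ib)\}$ (for $0<a<1$ there is no pole to worry about) this is the same as the existence of some real $b$ with $\Re\{\zeta(a+ib)\}=0$. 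So the lemma is the assertion that, for the density of states (\ref{a-Riemann2}), the relevant diagnostic is whether $\Re\{\zeta(a+ib)\}$ vanishes along the vertical line $\Re\nu=a$, specialised to $a=\tfrac{1}{2}$.

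Second, I would justify singling out $a=\tfrac{1}{2}$ by the functional equation $\zeta(s)=\chi(s)\,\zeta(1-s)$ with $|\chi(\tfrac{1}{2}+ib)|=1$: on the critical line this makes the Hardy $Z$-function $Z(b)=[\chi(\tfrac{1}{2}+ib)]^{-1/2}\,\zeta(\tfrac{1}{2}+ib)$ real-valued, and by Hardy's theorem it has infinitely many real zeros, so $\zeta(\tfrac{1}{2}+ib)=0$ — hence \emph{a fortiori} $\Re\{\zeta(\tfrac{1}{2}+ib)\}=0$ — for infinitely many $b$. Thus on $a=\tfrac{1}{2}$ the phase-transition condition identified in the first step does hold, and the physically well-posed question becomes merely \emph{at which} temperatures $T_c$ (and with what critical exponent $\Delta$) it is realised. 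This is the "softer form" advertised in the abstract: through $a(\epsilon)=A\,\Re\{\epsilon^{a-1+ib}e^{-c\epsilon}\}$ the condensate couples only to $\Re\{\zeta\}$, so it detects the single real equation $\Re\{\zeta(a+ib)\}=0$ rather than the pair $\Re\{\zeta\}=\Im\{\zeta\}=0$ that defines an actual nontrivial zero.

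Third, I would close the loop with Theorem \ref{Riemann hypothesis}. A full physical reduction of the hypothesis needs both alternatives of the density of states used in this section: the $\cos(b\ln\epsilon)$ branch built into (\ref{a-Riemann2}) produces $\Re\{\zeta\}$, while the $\sin(b\ln\epsilon)$ branch produces $\Im\{\zeta\}$, so that a nontrivial zero $\zeta(a+ib)=0$ corresponds to a simultaneous transition of both systems. The present lemma isolates the real-part half of this programme: it states that the object to examine in the classical limit is the transition driven by (\ref{a-Riemann2}) at $a=\tfrac{1}{2}$, whose existence was established in the second step, so that the Riemann hypothesis translates into the requirement that off the critical line the two transitions never coincide (the half-plane picture of Fig.\ \ref{zeta} and the bound $a\lesssim 1.19$ of ref.\ \cite{Juan} show where the real-part transition alone could still occur). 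I expect the main obstacle to be exactly the gap between $\Re\{\zeta\}=0$ and $\zeta=0$: a vanishing real part is strictly weaker than a zero, so a single-branch argument yields only a necessary reformulation rather than an equivalence; and making the classical limit rigorous — keeping $T_c$ finite and positive as $\hbar\to 0$, and controlling the interchange of that limit with the ground-state extraction performed in passing from (\ref{n11}) to (\ref{n1}) — is where genuine analysis, rather than formal manipulation, would be needed.
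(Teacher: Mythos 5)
The paper offers no proof of this lemma: it is a programmatic statement (``we have to check whether \dots or not''), and what follows it in the text is only a qualitative literature survey of classical limits of BEC (refs.\ \cite{Staliunas}, \cite{Davidson:2014hfa}, \cite{Horwitz:1980np}, \cite{Ketterle}, \cite{Onsager-Penrose}) with no derivation at all. Your proposal therefore constructs an argument where the paper has none, and its core reading is the right one: since the right-hand side of (\ref{zeta1}) tends to $0^{+}$ as $\hbar\to 0$ at fixed $T_c$ (equivalently as $T_c\to\infty$ at fixed $\hbar$), the survival of a condensation threshold in the classical limit is governed by whether $\Re\{\zeta(a+ib)\}$ can vanish on the vertical line $\Re{\nu}=a$; and your observation that $\Re\{\zeta\}=0$ is strictly weaker than $\zeta=0$ is exactly why the paper advertises this as a ``softer form'' of Theorem \ref{Riemann hypothesis}.

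Two caveats. First, your chain of equivalences inherits every formal step behind (\ref{zeta1}): the interchange of the limit $c\to 0$ with the $\epsilon$-integral, the silent disappearance of the third Lerch argument $\tilde{c}$ between (\ref{tc2}) and (\ref{zeta1}), and the assumed positivity of $A_0$. None of these is controlled in the paper, so your ``iff'' holds only at the same formal level as (\ref{zeta1}) itself. Second, Hardy's theorem is both overkill and slightly beside the point in your second step: $\Re\{\zeta(a+ib)\}$ changes sign, and hence vanishes for some real $b$, on \emph{every} vertical line with $a$ below the supremum $a\approx 1.19$ computed in ref.\ \cite{Juan} (the paper's own Fig.\ \ref{zeta} displays this), not only at $a=\tfrac{1}{2}$. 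Consequently the ``phase transition in the classical limit'' criterion you extract cannot by itself single out the critical line, which confirms your own closing diagnosis: the single-branch (cosine, i.e.\ real-part) construction yields a reformulation of the problem rather than an equivalence, and the genuine analytic work --- keeping $T_c$ finite as $\hbar\to 0$ and controlling the ground-state extraction --- remains undone here just as it does in the paper. Within the paper's own standard of rigor your reconstruction is acceptable and is substantially more explicit than anything the text provides.
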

The above lemma implicitly encourages us to find a classical limit of BEC.  Such a classical limit was well investigated before in the literature \cite{Staliunas}-\cite{Davidson:2014hfa}. To find the classical analog of BEC, we need to have a relativistic classical Gibbs ensemble theory . The  Gibbs ensembles in relativistic classical and quantum mechanics was investigated in \cite{Horwitz:1980np}. The existence of a classical  BEC state depends on how to realize classical condensate state as a  macroscopic matter wave. The fingerprints of quantum matter waves was observed in \cite{Ketterle} where the authors observed the  transition temperature $T_c$ which is lower than  the thermodynamic limit. In ref. \cite{Staliunas}, the author suggested it is possible to have  BEC as a coherent dynamics of the system.
There is a chance of detecting axions as classical fields for BEC at the large scales proposed in \cite{Davidson:2014hfa} where the author suggested that the axions as a motivated dark matter candidate can be consider the  BEC. The classical limit of BEC can be underestood when we consider condensate field as a classical field , in analogy  to the classical limit of quantum optics. If we have many quanta in each mode , then the fields are treated as the classical fields. As next step we can use the classical fields approximation. We will  need  two point correlation function . These correlation functions  can depend on the resolution of detectors in an appropriate experimental set up (coarse graining) and we can use the Onsager-Penrose definition of the condensate\cite{Onsager-Penrose}. I conclude that further studies about classical limits for BEC can be useful to find a general formula for zeros of zeta function on complex plane. My idea here should be considered
as a physical proposal to mkae a connection between a pure mathrematical hypothesis and the realistic case of physical situation in nature. Before doing this calcuolaion, we were not aware of the connection between zeta function with complex argument and the statistical behgavior of a gaseous system. The difficulty was to set the complex argument in the zeta function according to the parameters of statistical system. I show how  the real part of the zeta function can be related to the critical temperature  and it establishes a nice connection between pure mathematics and applied physics.
\section{Creation-annihilation operators  for BEC}

In this section, I study  further field theoretic aspects of such an atypical \textbf{such atypical} condensation phenomenon in terms of the creation(raising) and annihilation(lowering) operators proposed for Hurwitz-Lerch-Phi function by Lagaria and  Li.  in refs.\cite{Lagaria1}-\cite{Lagaria5}. My focus is to explore the hidden physics of these operators in the mechanism of BEC in our system. In my study I found that the Lerch-zeta function $\Phi(a+ib,z,\tilde{c})$ plays an essential role. The pair of differential-difference operators $\frac{\partial}{\partial \tilde{c}}, z \frac{\partial}{\partial z}$ can be used to define a pair of  creation(raising)$D_{L}^{\pm}$ and annihilation(lowering) operators as follows:
\begin{eqnarray}
&&D_L^{+}=\frac{\partial}{\partial \tilde{c}}\label{d1}\\&&
	D_L^{-}=\frac{\partial}{\partial \ln z}+\tilde{c}\label{d2}.
\end{eqnarray}
where $[D_L^{+},D_L^{-}]_{-}=1$.
Note that these operators are non-self-adjoint linear operators acting in a Hilbert spaces. The reason is that the following integral identity no longer is valid:
\begin{widetext}
\begin{eqnarray}&&
\int_{\partial\Box}{\psi}^{\dagger}(z,\tilde{c})D_L^{\pm}\phi(z,\tilde{c})dzd\tilde{c}=\int_{\partial\Box}(D_L^{\pm}\psi(z,\tilde{c}))^{\dagger}\phi(z,\tilde{c})dzd\tilde{c}.
\end{eqnarray}
\end{widetext}
where $\partial\Box=[0,1]\times[0,1]$.
The corresponding eigenvalues of these operators are not real numbers and their corresponding wave functions can't built an orthonormal basis for the Hilbert space 
which should  be build for any self sdjoint operator.
The Lerch differential number operator is defined as follows:
\begin{eqnarray}
&&D_L\equiv D_L^{-} D_L^{+}=\frac{1}{\tilde{c}}
\frac{\partial^2}{\partial \ln z\partial \ln\tilde{c}}+\frac{\partial}{\partial \ln\tilde{c}}.
\end{eqnarray}
here $[D_L,D_L^{+}]_{-}=-D_L^{+},[D_L,D_L^{-}]_{-}=D_L^{-}$. The Hamiltonian like operator can be written as $H=1+D_L$.
It is illustrative to show that 
\begin{eqnarray}
&&D_L^{+}\Phi(\nu,z,\tilde{c})=-\nu\Phi(\nu+1,z,\tilde{c})
\label{d11}\\&&
D_L^{-}\Phi(\nu,z,\tilde{c})=\Phi(\nu-1,z,\tilde{c})\label{d22}.
\end{eqnarray}
combining equations (\ref{d11},\ref{d22}) we conclude that 
\begin{eqnarray}
&&D_L\Phi(\nu,z,\tilde{c})=-\nu\Phi(\nu,z,\tilde{c})
\label{d111}.
\end{eqnarray}
We  interpret the
physical operator equations as creation and annihilation equations to make create and annihilate a hypothetical virtual Boson which we call "Lerchton". This pair production phenomena happens in a phase space spanned by two coordinates $(\ln z,\tilde{c})$. The creation operator $D_L^{+}$ creates a Lerchton at fixed chemical potential and $D_L^{-}$ annihilates it but after transforming it to the new location, i.e. $\tilde{c}+1$ in the phase space. We mention here  that $\ln z\sim \mu $
defines an energy scale and $\tilde{c}\sim\beta\sim E^{-1}$ corresponds to a length scale. The first variable needs an ultraviolet cutoff and the length scale required an infrared cutoff. Using equation (\ref{d111}) we are able to interpret $-\nu$ as the expectation value of the number operator $D_L$(similar to the number operator  in a simple harmonic oscillator in ordinary qunatum mechanics) at a given bipartite state  defined by  $|\nu,\tilde{c}>$. The Lerch function  $\Phi(\nu,z,\tilde{c})\equiv <z|\nu,\tilde{c}>$, is the configurational projection of the original ket vector. An alternative form to the equation (\ref{d111}) is:
\begin{eqnarray}
&&D_L|\nu,\tilde{c}>=-\nu|\nu,\tilde{c}>
\label{d1110}.
\end{eqnarray}
Define $|\nu,\tilde{c}>$   as the normalized eigenstates of $D_L$ , and let it be understood that the physical states are labeled by the eigenvalue, i.e. $-\nu\in\mathcal{C}$ is complex in general. Note that at negative integer roots of zeta function $\zeta(\nu=-n)=0$, the eigenvalue of the operator $D_L$ is positive and it correponds to a real number of particles. 
If we consider the quantity \begin{eqnarray}
&&n\equiv <\nu,\tilde{c}|D_L|\nu,\tilde{c}>\end{eqnarray} 

It follows that 
\begin{eqnarray}
&&n= \frac{(\zeta(\nu)-1)}{1-\nu^{-1}} \label{n}
\end{eqnarray}
 is real and positive -definite for $\nu\in\mathcal{R}^{+}$. We plot the number of particles (\ref{n}) in figure (\ref{fign}) . 

\begin{figure}
	\includegraphics[width=\linewidth]{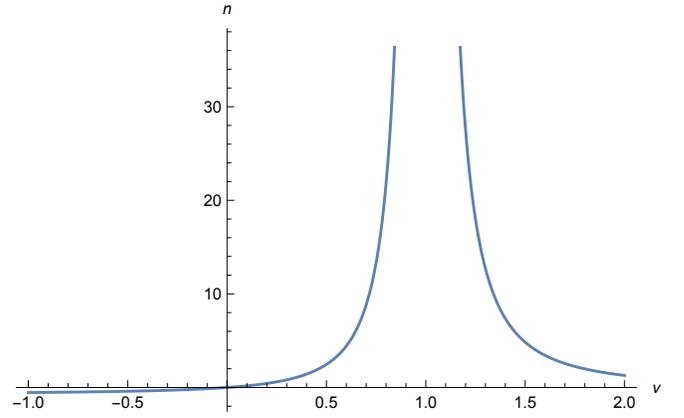}
	\caption{A plot of the number of perticles given in equation (\ref{n})  in the plane $-1<\Re{\nu}\leq2$   
		. It shows that the number of particles is real and positive -definite for $\nu\in\mathcal{R}^{+}$. It   infinite when $\nu\to1$. When $\nu\to 0$ the number of particles vanishes and it implies BEC in the state there isn't any particle at this level.    }
	\label{fign}
\end{figure}
From figure (\ref{fign}) we learn that the number of the particles become infinity when $\nu\to 1$. It implies an existed BEC condensation at state $|1,\tilde{c}>$ as the ground state of the system. Fuurther works can be done to build the Fock space corresponds to the multiparticle states. \par
For its intrinsic interest we now study coherent states correspondingh to the annihilation operator $D_L^{-}$, as the solution to the following differential-difference equation:
\begin{eqnarray}
&&
D_L^{-}\sum_{\nu}C_{\nu} \Phi(\nu,z,\tilde{c})=\alpha \sum_{\nu}C_{\nu} \Phi(\nu,z,\tilde{c})\label{coh}.
\end{eqnarray}
Using equation (\ref{d22}) we can tranform it to the folllowing equation:
 \begin{eqnarray}
 &&
 \sum_{\nu}(C_{\nu+1}-\alpha C_{\nu}) \Phi(\nu,z,\tilde{c})+\frac{C_0z}{1-z}=0\label{coh1}.
 \end{eqnarray}
Note that here $\sum_{\nu}$ is a symbol for a more general measure $\int d\nu$ over the full complex plane. The equation (\ref{coh1})is a differential-difference equation should be solved for $C_{\nu}$. Note that $ \Phi(\nu,z,\tilde{c})$ are considered as a set of linearly independent functions because they are eigenfunctions for the general operator $D_L$. A possible ad-hoc solution to equation (\ref{coh1}) is  given for $C_0=0$ and the coherent state then corresponds to it is expressed   as eq. (\ref{coh12}):
 \begin{eqnarray}
&&|\alpha>=A_0\sum_{\nu}\alpha^{\nu-1} \Phi(\nu,z,\tilde{c})
\label{coh12}.
\end{eqnarray}
We stress here that the above coherent state isn't normalized, but we can make it normalized if we opt
 \begin{eqnarray}
&&A_0=\Big(\sum _{\nu =-i\infty }^{i\infty } \frac{  \zeta (\nu )-1}{1-\nu^{-1} }\alpha ^{2 (\nu -1)}
\Big)^{-1/2}
\end{eqnarray}
where we have taken into account all values of $\nu\in\mathcal{C}$.

\subsection{Finding minimum of $\nu$ via variational method} 
The eigenvalue-eigenfunction operator equation given in (\ref{d111}) provides an easy way to find minimum of $\nu$ through a useful variational method. It is necessary to introduce a suitable functional on the phase space $(\ln z,\tilde{c})$ in the form that the resulting Euler-Lagrange equation gives us the linear second order partial differential equation proposed in 
The eigenvalue-eigenfunction operator introduced  given in (\ref{d111}). A possible functional to minimize for a trial function $\psi$ is given as follows
\begin{widetext}
	\begin{eqnarray}
	&&\Re \nu_{0}=-\Re Minimize\Big[\frac{\int_{\partial\Box}\bar{\psi}(z,\tilde{c}) D_L^{-} D_L^{+}\psi(z,\tilde{c}) dzd\tilde{c}}{\int_{\partial\Box}|\psi(z,\tilde{c})|^2dzd\tilde{c}}\Big]
\label{nu0}
	\end{eqnarray}
\end{widetext}
It defines a single-valued analytic functional to the region $\partial\Box$ and 
 the trial function $\psi$ should satisfy the essential boundary condition: 
\begin{widetext}
	\begin{eqnarray}
	&&\psi(z,0)=
	\left\{
	\begin{array}{ll}
	0  & \mbox{if } z\to 0 \\
1
	&  \mbox{if } z\to 1
	\end{array}
	\right.
	\end{eqnarray}
\end{widetext}
A possible trial function could be $\psi(z,\tilde{c})=(1+\tilde{c})z$. We evaluate (\ref{nu0}) we obtain $\Re \nu_{0}=-1$.

\section{summary}
The special functions of mathematical physics play essential roles in building new physics and developing new  mathematical tools in modern physics. The Hurwitz-Lerch-Phi function proposed as an attempt to generalize the Riemann zeta function to the complex plane as well as break its non simplicity to write it as a solution for a linear partial differential equation; the last was Riemann's attempt to zeta function. The Riemann zeta functions appeared as key functions in the study of the statistical mechanics of on a non interacting Boson dilute gas in the grand canonical ensemble. So far to my knowledge there isn't any other physical system with Hurwitz-Lerch-Phi functions as their key functions. I demonatrated here that  statistical mechanics for an expoential density of states functions leads to Hurwitz-Lerch-Phi function. if we set  the Lerch parameter to zero , it was shown that the statistical expressions for number density (specific volume) and pressure can be represented as the real part of the Hurwitz-Lerch-Phi function. A remarkable observation was that the bosonic system doesn't exist for momentum below a specific momentum $p^{*} $, where below this value the energy spectrum becomes complex and it will break the reality of our system. This statement is proved in detail by considering the left branch of the energy spectrum function. Furthermore I showed that there is a real valued density of states function with a supremum value over the whole energy range. I proposed a softer theorem in support of the idea that the roots of the real part of the zeta function in the half-plane of the complex argument can be obtained in general closed form by investigating classical limits of condensation phenomena. I mention here that such classical limits can be understood by two ways: one is by passing to $\hbar\to0$, and secondly by considering a very hot bath in the condensed phase. A  field theoretic point of view was introduced to interpret the Lerch functions as normalized stste for an irregular type of number operator in Fock space. I demonstrated that the number of particles in a given state can be a positive number on the half-complex plane of the zeta function parameter $\nu$. This provides a systematic basis to study multiparticle states for such BEC system from a field theory point of view. In a forthcoming paper I will study more field theoretical aspects of this system. The aim will be to prove the Riemann hypothesis in  in a weaker form that  proposed in this paper.

\section{Acknowledgments}
I thank Prof. Jeffrey. C. Lagarias (University of Michigan) for  very useful comments and discussions. 



\end{document}